\numberwithin{equation}{section}
\newtheorem{Thm}{Theorem}[section]
\newtheorem{Lem}[Thm]{Lemma}
\newtheorem{Prop}[Thm]{Proposition}
\newtheorem{Rem}[Thm]{Remark}
\newcommand{\R}{\mathbb{R}}
\newcommand{\N}{\mathbb{N}}
\newcommand{\V}{\mathbb{V}}
\begin{document}
\begin{frontmatter}

\title{High energy solutions of the Choquard equation
} %\tnotetext[sup]{}\tnoteref{sup}

%% use optional labels to link authors explicitly to addresses:
%% \author[label1,label2]{}
%% \address[label1]{}
%% \address[label2]{}

\author[C.D]{Daomin Cao}
\ead{dmcao@amt.ac.cn}

\author[H.L]{Hang Li\corref{cor1}}
\ead{hli@amss.ac.cn}

\cortext[cor1]{Corresponding author.}
\address[C.D]{School of Mathematics and Information Science,
Guangzhou University,
Guangzhou 510405, Guangdong, P.R.China\\
and Institute of Applied Mathematics, AMSS, Chinese Academy of Science, Beijing 100190,  P.R. China}
\address[H.L]{Institute of Applied Mathematics, Chinese Academy of Science, Beijing 100190, and University of Chinese Academy of Sciences, Beijing 100049,  P.R. China}

\begin{abstract}

In this paper we are concerned with the existence of positive high energy solutions of the Choquard equation. Under certain assumptions, the ground state of Choquard equation does not exist. However, by global compactness analysis, we prove that there exists a positive high energy solution.

\end{abstract}
\begin{keyword}
Choquard equation; global compactness; mini-max method; high energy solution
\end{keyword}
\end{frontmatter}

%\addcontentsline{toc}{section}{References}
%\newpage

%\tableofcontents
%\section{Introduction}
%\addcontentsline{toc}{section}{Introduction}

\section{Introduction}
In this paper, we study the  following Choquard equation
\begin{equation}\label{1-0}
\begin{cases}
 -\Delta u+u=Q(x)\left(I_\alpha*\vert u\vert^p\right)\vert u\vert^{p-2}u \ \ \ in\ \ \R^N,\\
u\in H^1(\R^N),
\end{cases}
\end{equation}
where $I_\alpha(x)$ is the Riesz potential of order $\alpha \in (0,N)$ on the Euclidean space $\R^N$,
defined for each point $x \in \R^N\backslash \{0\}$ by
\begin{equation}\nonumber
I_\alpha(x)=\frac{A_\alpha}{\vert x\vert^{N-\alpha}},\ \  where\ A_\alpha=\frac{\Gamma(\frac{N-\alpha}{2})}{\Gamma(\frac{\alpha}{2})\pi^{\frac{N}{2}}2^\alpha}
\end{equation}
and $Q(x)$ is a positive bounded continuous function on $\R^N$. We consider the existence of high energy solutions under the assumptions that $\alpha=2, p=2, N=3,4,5$ or $\alpha=2,\, N=3,\ 2<p<\frac{7}{3}$.

When $\alpha=2, p=2,\, N=3$, (\ref{1-0}) is usually called the Choquard-Pekar equation which can be traced back to the 1954's work by Pekar on quantum theory of a Polaron \cite{S.P} and to 1976's model of Choquard of an electron trapped in its own hole, in an approximation to Hartree-Fock theory of one-component plasma \cite{L.E}. What's more, some Schr$\ddot{o}$dinger-Newton equations were regarded as the Choquard type equation.

When $Q(x)$ is a positive constant and $1+\frac{\alpha}{N}\leq p\leq\frac{N+\alpha}{N-2},\, N\geq 3$, the existence of positive ground state solutions of (\ref{1-0}) has been studied
in many papers,see \cite{L.E, L.P, M.L, M.G, M.V}, for instance. In addition, uniqueness of positive solutions of the Choquard equations has also been widely discussed in recent years by a lot of papers \cite{L.E1, L.E, T.K, W.T, X.C}. In \cite{W.T}, T.Wang and Taishan Yi proved that the positive solution of (\ref{1-0}) is uniquely determined, up to translation provided $\alpha=2, p=2,N=3,4,5$. The assumption on $p=2$ can be extended to $p>2$ and close to 2 when $N=3,\ \alpha=2$, and under these assumptions C.L. Xiang  proved that the positive solution of (\ref{1-0}) is unique in \cite{X.C}. What's more, in \cite{M.V},V. Moroz and J.Van Schaftingen gave some results on the decay of ground state solutions of the Choquard equation which will be used in the proof of our results.(Another result on decay of ground states was shown in \cite{X.C})

Motivated by D. Cao's work \cite{D.C,Cao}, we prove in this paper that there exists a positive high energy solution of the Choquard equation under the following condition on $Q(x)$:

(\textbf{C}): $\lim_{\vert x\vert\rightarrow +\infty} Q(x)=\bar{Q}>0,\,\,\, Q(x)\geq\frac{\sqrt 2}{2}\bar{Q},\,x\in\R^N $.\\
One can also find some other assumptions on $Q(x)$ under which similar results can be obtained. In particular we would like to mention the results in \cite{A.B} in which A. Bahri and Y.Y. Li
showed that there exists a positive solution of certain semilinear elliptic equations in $\R^N$ even if the ground state can not be achieved.

The limiting problem of (\ref{1-0}) is as following
\begin{equation}\label{1-1}
\begin{cases}
 -\Delta u+u=\bar{Q}\left(I_\alpha*\vert u\vert^p\right)\vert u\vert^{p-2}u \ \ \ in\ \ \R^N,\\
u\in H^1(\R^N),
\end{cases}
\end{equation}
where  $\bar{Q}$ is the positive constant given in condition (\textbf{C}).

When $\alpha=2, p=2, N=3$, Schr$\ddot{o}$dinger-Newton equation can be regarded as the Choquard type equation.
In \cite{G.V}, Giusi Vaira proved existence of positive bound solutions of a particular Schr$\ddot{o}$dinger-Newton type systems. However the structure of equation in \cite{G.V} is different from ours and we extend the assumption on $N$ and $p$ as well. One of the difficulties to prove our results is that the Brezis-Lieb lemma can not be applied directly to our proof. In order to overcome the difficulty we improve the results of lemma 2.2 in \cite{G.V1} for $N=3,4,5$ or $N=3,\ 2<p<\frac{7}{3}$. What's more, the main method of our proof depends on global compactness analysis and min-max method.

Our main results are as following
\begin{Thm}\label{1-3}
Assume that condition (\textbf{C}) holds, $\alpha=2, p=2, N=3,4,5$, then when the ground state level can not be achieved (\ref{1-0}) has a positive high energy solution.
\end{Thm}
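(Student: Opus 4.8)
\emph{Sketch of the argument.} I would realize the high-energy solution as a positive critical point of
\[
I(u)=\frac12\int_{\R^N}\big(|\nabla u|^2+u^2\big)\,dx-\frac14\int_{\R^N}Q(x)\big(I_2*|u|^2\big)|u|^2\,dx ,
\]
at a min--max level $c^{*}$ trapped strictly between the limiting least-energy level $c_\infty$ and $2c_\infty$. Work on the Nehari manifold $\mathcal N=\{u\in H^1(\R^N)\setminus\{0\}:\langle I'(u),u\rangle=0\}$, on which $I(u)=\tfrac14\|u\|_{H^1}^2$, and set $c=\inf_{\mathcal N}I$; the same construction for \eqref{1-1} gives $c_\infty=\tfrac14\|U\|_{H^1}^2$, where $U>0$ is the ground state of \eqref{1-1}, radially symmetric and, by the theorem of Wang and Yi quoted above (\cite{W.T}), unique up to translation. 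A routine first step: the standing hypothesis that the ground state is not achieved forces $c=c_\infty$. Indeed $c\le c_\infty$ by testing with $tU(\cdot-y)$ as $|y|\to\infty$, and if $c<c_\infty$ then the global compactness analysis below makes every Palais--Smale sequence at level $c$ precompact, producing a ground state. Consequently $I$ has no nontrivial critical point on $\mathcal N$ below $c_\infty$, and every nonnegative nontrivial solution of \eqref{1-1} has energy exactly $c_\infty$.

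Now set up the min--max. Fix $R$ large; for $y\in\overline{B_R}$ let $t(y)>0$ be the unique number with $t(y)U(\cdot-y)\in\mathcal N$. Changing variables,
\[
t(y)^{2}=\frac{\|U\|_{H^1}^{2}}{\int_{\R^N}Q(w+y)\,\big(I_2*U^{2}\big)(w)\,U(w)^{2}\,dw};
\]
since $Q\ge\tfrac{\sqrt2}{2}\bar Q$ by condition (\textbf{C}) and $\|U\|_{H^1}^{2}=\bar Q\int(I_2*U^{2})U^{2}$, this gives $t(y)^{2}\le\sqrt2$ for every $y$, hence $I\big(t(y)U(\cdot-y)\big)=\tfrac14 t(y)^{2}\|U\|_{H^1}^{2}\le\sqrt2\,c_\infty$. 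Since $U$ is radial, the barycenter $\mathcal B(u)=\big(\int|u|^{2}\big)^{-1}\int x|u|^{2}\,dx$ satisfies $\mathcal B\big(t(y)U(\cdot-y)\big)=y$. Let $\Gamma$ be the family of continuous maps $\gamma:\overline{B_R}\to\mathcal N\cap\{u\ge0\}$, with finite first moment, whose restriction to $\partial B_R$ is $y\mapsto t(y)U(\cdot-y)$, and put $c^{*}=\inf_{\gamma\in\Gamma}\max_{y\in\overline{B_R}}I(\gamma(y))$. The explicit competitor $\gamma_0(y)=t(y)U(\cdot-y)$ lies in $\Gamma$, so $c^{*}\le\sqrt2\,c_\infty<2c_\infty$. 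For the reverse inequality $c^{*}>c_\infty$: on $\partial B_R$ the map $\mathcal B\circ\gamma$ equals the identity, so a Brouwer-degree argument yields, for each $\gamma\in\Gamma$, a point $y_\gamma$ with $\mathcal B(\gamma(y_\gamma))=0$; hence $c^{*}\ge\inf\{I(u):u\in\mathcal N,\ u\ge0,\ \mathcal B(u)=0\}$. That last infimum exceeds $c_\infty$: were it equal, a minimizing sequence for $I$ on the nonnegative part of $\mathcal N$ would, by concentration--compactness and the uniqueness of $U$, reduce to a single translate $U(\cdot-\xi_n)$ with $|\xi_n|\to\infty$ (the case $|\xi_n|$ bounded would produce a ground state), whose barycenter diverges --- contradicting $\mathcal B=0$. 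Thus $c_\infty<c^{*}\le\sqrt2\,c_\infty$.

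Finally, convert $c^{*}$ into a solution. Since $I(u)=I(|u|)$ and the maps in $\Gamma$ are nonnegative, a standard deformation argument yields a nonnegative Palais--Smale sequence $(u_n)$ for $I$ at level $c^{*}$; to it one applies the global compactness theorem for the nonlocal functional: along a subsequence,
\[
u_n=u_0+\sum_{j=1}^{k}w_j(\cdot-y_n^{j})+o(1)\ \text{in }H^1(\R^N),\qquad c^{*}=I(u_0)+\sum_{j=1}^{k}I_\infty(w_j),
\]
with $u_0\ge0$ a solution of \eqref{1-0}, each $w_j\ge0$ a nontrivial solution of \eqref{1-1}, and $|y_n^{j}|\to\infty$, $|y_n^{i}-y_n^{j}|\to\infty$. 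By uniqueness of the positive solution of \eqref{1-1}, $I_\infty(w_j)=c_\infty$ for each $j$, while $I(u_0)\ge c_\infty$ if $u_0\ne0$. Since $c^{*}\in(c_\infty,2c_\infty)$, the only admissible option is $k=0$ with $u_0\ne0$; thus $u_n\to u_0$ strongly and $u_0$ solves \eqref{1-0} with $I(u_0)=c^{*}>c_\infty=c$. Elliptic regularity and the strong maximum principle upgrade $u_0$ to a positive function, giving the desired high-energy solution.

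The main obstacle is the global compactness step. The nonlocal term $\int Q(x)(I_2*|u|^2)|u|^2$ does not decouple along a weakly convergent sequence via the ordinary Brezis--Lieb lemma, so one must first prove a nonlocal Brezis--Lieb--type identity --- the announced improvement of Lemma~2.2 of \cite{G.V1}, valid in the $H^1$-subcritical range $N=3,4,5$ --- and then carry out the iterative extraction of bubbles, controlling the cross interactions between escaping profiles and the effect of the variable coefficient $Q$; the decay estimates of \cite{M.V} are used here and in checking that the barycenter is well defined on the relevant class. Making the barycentric min--max fully rigorous (finiteness and continuity of $\mathcal B$ on the chosen class) is routine given the exponential decay of $U$.
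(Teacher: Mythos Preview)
Your argument follows the same route as the paper's: a barycentric min--max on a constraint manifold trapping a critical value strictly between the limiting ground-state level and twice that level, together with a global-compactness/splitting lemma for the Choquard functional (the nonlocal Brezis--Lieb identity being the key technical input). The paper works on the normalized set $\V=\{u\ge0:\int Q(I_2*|u|^2)|u|^2=1\}$ with $J(u)=\|u\|_{H^1}^2$ instead of your Nehari manifold, but the two formulations are equivalent via $u_0=c^{1/2}v_0$, so your energy window $(c_\infty,2c_\infty)$ is exactly the paper's $(M^*,\sqrt2\,M^*)$ window in Lemma~\ref{2-7}, and your upper bound $t(y)^2\le\sqrt2$ is their estimate $\max J(h(y))<\sqrt2\,M^*$.

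The one point where you diverge from the paper and where your sketch is not yet rigorous is the choice of barycenter. You use the raw first moment $\mathcal B(u)=\big(\int|u|^2\big)^{-1}\int x|u|^2$, restricting $\Gamma$ to maps into functions with finite first moment and declaring the verification ``routine''. It is not: $\mathcal B$ is not continuous for the $H^1$ topology on that set, and more importantly the pseudogradient flow used to produce a Palais--Smale sequence at level $c^*$ has no reason to preserve finiteness of $\int |x|\,|u|^2$, so the min--max class $\Gamma$ is not stable under the deformation lemma. The paper avoids this by using the truncated barycenter
\[
\beta(u)=\int_{\R^N} u^2\,\chi(|x|)\,x\,dx,\qquad \chi(t)=\min\{1,1/t\},
\]
which is defined and $H^1$-continuous on all of $\V$; the linking and the strict lower bound $\bar M>M^*$ (your $\inf\{I:\,\mathcal B=0\}>c_\infty$) are then proved in Lemma~\ref{3-0} via the exponential decay from Proposition~\ref{1-2}. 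Replacing your $\mathcal B$ by such a bounded-weight barycenter is the standard fix (cf.\ \cite{B.V,D.C}) and makes your outline go through without further change.
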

\begin{Rem}\label{1-4}
Suppose that $\bar{Q}-Q(x)\geq 0$ holds in $\R^N$ and $Q(x)$ is not a constant, then it is not difficult to see that the ground state does not exist.
\end{Rem}
\begin{Rem}
When $N=3, \alpha=2$ and $2<p<\frac{7}{3}$ the uniqueness result in \cite{X.C} implies that the positive solution of (\ref{1-1}) is unique up to a translation. Moreover if we replace condition (\textbf{C}) by the following condition:\\
($\textbf{C}^{\,*}$): $\lim_{\vert x\vert\rightarrow +\infty} Q(x)=\bar{Q}>0,\ Q(x)\geq 2^{1-p}\bar{Q},\,\,\,x\in\R^N$ \\
then using the uniqueness result and condition ($\textbf{C}^{\,*}$) we can see that the result of Theorem \ref{1-3} is also true by a similar discussion without bringing about new difficulties.
\end{Rem}

 Our paper is organized as follows. In section 2 we first give some notations and preliminary results for our proof of Theorem \ref{1-3}. In section 3, we give the proof of Theorem \ref{1-3}.

\section{Some Notations and Preliminary Results}

In this section we give some preliminary results which will be used in our discussion in next section. To start with, let us first give some definition. Define
\begin{eqnarray*}
I(u)&=&\frac{1}{2}\int_{\R^N}\vert \nabla u\vert^2+u^2-\frac{1}{4}\int_{\R^N}Q(x)I_\alpha*\vert u\vert^2\vert u\vert^2,\\
I^*(u)&=&\frac{1}{2}\int_{\R^N}\vert \nabla u\vert^2+u^2-\frac{1}{4}\int_{\R^N}\bar{Q}I_\alpha*\vert u\vert^2\vert u\vert^2,\\
J(u)&=&\int_{\R^N}\vert \nabla u\vert^2+u^2,\\
\V&=&\{u\,\,|\,\,u \in H^1(\R^N),u\geq 0, \int_{\R^N}Q(x)I_\alpha*\vert u\vert^2\vert u\vert^2=1\},\\
\V^*&=&\{u\,\,|\,\,u \in H^1(\R^N),u\geq 0, \int_{\R^N}\bar{Q}I_\alpha*\vert u\vert^2\vert u\vert^2=1\}.
\end{eqnarray*}
Let $M$ and $M^*$ be defined respectively by
\begin{equation}\nonumber
M=inf\{J(u)\,\,|\,\,u \in \V\}\ \ \ \  and\ \ \ \ \ M^*=inf\{J(u)\,\,|\,\,u \in \V^*\}.
\end{equation}
$D^{1,2}(\R^N)$ is the completion of $C^\infty_0(\R^3)$ with respect to the norm
\begin{equation}
\|u\|^2_{D^{1,2}}=\int_{\R^3}\vert \nabla u\vert^2dx.
\end{equation}

For later discussion, we introduce an inequality given in \cite{H.G}.
\begin{Prop}(Hardy-Littlewood-Sobolev inequality \cite{H.G})\label{2-0}
Let $q \in (1,+\infty)$ and  $\alpha< \frac{N}{q}$, then for every $f\in L^q(\R^N)$,
 $I_\alpha*f \in L^\frac{Nq}{N-\alpha q}(\R^N)$ and
\begin{equation*}
(\int_{\R^N}\vert I_\alpha*f\vert^\frac{Nq}{N-\alpha q})^\frac{N-\alpha q}{Nq}\leq C_{N,\alpha,q}(\int_{\R^N}\vert f\vert^q)^\frac{1}{q}.
\end{equation*}
\end{Prop}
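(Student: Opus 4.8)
\medskip
\noindent\textbf{Proof proposal.}\ This is the classical Hardy--Littlewood--Sobolev inequality, so one may simply cite \cite{H.G}; nonetheless, here is the route I would take for a self-contained argument, based on the Hardy--Littlewood maximal function $Mf(x)=\sup_{\rho>0}\frac{1}{|B_\rho|}\int_{B_\rho(x)}|f|$. Since $I_\alpha\geq 0$ one has $|I_\alpha*f|\leq I_\alpha*|f|$, so I would assume $f\geq 0$, and by homogeneity normalize $\int_{\R^N}f^q=1$. For a radius $R>0$ (chosen below depending on $x$), split
\[
(I_\alpha*f)(x)=A_\alpha\!\int_{|y|<R}\!\frac{f(x-y)}{|y|^{N-\alpha}}\,dy+A_\alpha\!\int_{|y|\geq R}\!\frac{f(x-y)}{|y|^{N-\alpha}}\,dy=:\mathrm{I}(x)+\mathrm{II}(x).
\]
For $\mathrm{I}$ I would cut $\{|y|<R\}$ into dyadic shells $\{R2^{-k-1}\leq|y|<R2^{-k}\}$, bound $|y|^{\alpha-N}\leq(R2^{-k-1})^{\alpha-N}$ on the $k$-th shell, and use $\int_{|y|<\rho}f(x-y)\,dy\leq|B_1|\,\rho^{N}\,Mf(x)$; summing the resulting geometric series, which converges precisely because $\alpha>0$, gives $\mathrm{I}(x)\leq C(N,\alpha)\,R^{\alpha}\,Mf(x)$. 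For $\mathrm{II}$ I would apply H\"older with exponents $q$ and $q'$, so that $\mathrm{II}(x)\leq A_\alpha\|f\|_{L^q}\bigl(\int_{|y|\geq R}|y|^{-(N-\alpha)q'}dy\bigr)^{1/q'}$; this integral is finite — and equals a constant times $R^{\alpha-N/q}$ — exactly when $(N-\alpha)q'>N$, which is equivalent to the hypothesis $\alpha q<N$.

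Thus $(I_\alpha*f)(x)\leq C\bigl(R^{\alpha}Mf(x)+R^{\alpha-N/q}\bigr)$, and choosing $R=Mf(x)^{-q/N}$ to balance the two terms yields Hedberg's pointwise bound
\[
(I_\alpha*f)(x)\leq C(N,\alpha,q)\,\bigl(Mf(x)\bigr)^{\,1-\alpha q/N}.
\]
Raising to the power $s:=\frac{Nq}{N-\alpha q}$, integrating, and using the arithmetic identity $s\bigl(1-\tfrac{\alpha q}{N}\bigr)=q$, I obtain $\|I_\alpha*f\|_{L^{s}}^{s}\leq C\int_{\R^N}(Mf)^{q}$. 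The argument then closes by invoking the Hardy--Littlewood maximal theorem $\|Mf\|_{L^q}\leq C(N,q)\|f\|_{L^q}$, valid for every $q\in(1,\infty)$; together with the normalization this gives the claim with some constant $C_{N,\alpha,q}$ (not the sharp one).

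The only genuine obstacle is that the endpoints are truly excluded and not an artifact of the method: at $q=1$ the maximal function fails to be $L^1$-bounded (only weak type holds), and at $\alpha q=N$ the tail integral in $\mathrm{II}$ diverges while $s=\infty$ — so I would not try to push the argument uniformly to those boundaries. An alternative I would keep in reserve but not carry out is the duality route: pass to the symmetric bilinear form $\iint|x-y|^{-(N-\alpha)}f(x)g(y)\,dx\,dy$ tested against $g\in L^{s'}$, establish the weak-type bound $L^q\to L^{s,\infty}$ for $f\mapsto|\cdot|^{-(N-\alpha)}*f$ by splitting the kernel into an $L^a$ piece and an $L^b$ piece and optimizing on level sets, and then upgrade to strong type via the Marcinkiewicz interpolation theorem; this avoids the maximal function at the cost of slightly heavier bookkeeping with distribution functions.
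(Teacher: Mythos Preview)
Your proposal is correct. The paper itself does not prove this proposition at all: it is stated with a reference to \cite{H.G} and used as a black box, so your opening remark that ``one may simply cite \cite{H.G}'' is exactly what the paper does. The self-contained argument you then supply---Hedberg's pointwise inequality via the dyadic-shell bound on the local part, H\"older on the tail, optimization in $R$, and the Hardy--Littlewood maximal theorem---is a standard and valid route to the (non-sharp) HLS inequality, and every step checks out, including the exponent bookkeeping $s(1-\alpha q/N)=q$ and the identification of where the hypotheses $q>1$ and $\alpha q<N$ enter. The only omissions are the routine caveats that $Mf(x)$ is finite a.e.\ for $f\in L^q$ (so the choice $R=Mf(x)^{-q/N}$ makes sense) and that $Mf(x)=0$ forces $f\equiv 0$; neither affects the argument. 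In short: you have done strictly more than the paper, and done it correctly.
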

By Proposition \ref{2-0}, we have
\begin{eqnarray}\label{2-1}
\int_{\R^N}(I_\alpha*\vert u\vert^p)\vert u\vert^p&\leq& C_{N,\alpha}(\int_{\R^N}\vert u\vert^\frac{2Np}{N+\alpha})^{1+\frac{\alpha}{N}}\nonumber\\
&\leq& C(\int_{\R^N} \vert\nabla u\vert^2+\vert u\vert^2)^p.
\end{eqnarray}
As a consequence of (\ref{2-0}) we can easily get
\begin{Prop}\label{2-2}
Suppose $1+\frac{\alpha}{N}\leq p< \frac{N+\alpha}{N-2}$ and $\alpha \in (0,N)$. If $u_m\rightharpoonup 0$ in $H^1(\R^N)$, then for any bounded domain $\Omega$ in $\R^N$,
\begin{equation*}
\int_\Omega (I_\alpha*\vert u_m\vert^p)\vert u_m\vert^p\rightarrow 0.
\end{equation*}
\end{Prop}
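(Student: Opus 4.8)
The plan is to reduce the claim to the \emph{compactness} of the Sobolev embedding on bounded domains, using the Hardy--Littlewood--Sobolev inequality (Proposition \ref{2-0}) to tame the nonlocal factor. First I would record two consequences of $u_m\rightharpoonup 0$ in $H^1(\R^N)$. Since weakly convergent sequences are bounded, $\{u_m\}$ is bounded in $H^1(\R^N)$, hence bounded in $L^{\frac{2Np}{N+\alpha}}(\R^N)$ by the Sobolev embedding; here one checks that the hypothesis $1+\frac{\alpha}{N}\le p<\frac{N+\alpha}{N-2}$ is precisely the statement that $\frac{2Np}{N+\alpha}$ lies in the interval $[2,\frac{2N}{N-2})$. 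Moreover, because $\frac{2Np}{N+\alpha}$ is \emph{strictly} subcritical, the Rellich--Kondrachov theorem gives $u_m\to 0$ strongly in $L^{\frac{2Np}{N+\alpha}}(\Omega)$ for every bounded domain $\Omega\subset\R^N$.

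Next I would write the quantity to be estimated as
\[
\int_\Omega (I_\alpha*\vert u_m\vert^p)\vert u_m\vert^p=\int_{\R^N}(I_\alpha*\vert u_m\vert^p)\,(\chi_\Omega\vert u_m\vert^p),
\]
where $\chi_\Omega$ is the indicator function of $\Omega$; this is legitimate because $I_\alpha>0$ and $\vert u_m\vert^p\ge 0$. Applying Proposition \ref{2-0} with $q=\frac{2N}{N+\alpha}$ (so that $\alpha<\frac{N}{q}$ is just $\alpha<N$, and $\frac{Nq}{N-\alpha q}=\frac{2N}{N-\alpha}$) shows $I_\alpha*\vert u_m\vert^p\in L^{\frac{2N}{N-\alpha}}(\R^N)$ with $\|I_\alpha*\vert u_m\vert^p\|_{L^{2N/(N-\alpha)}}\le C\|u_m\|_{L^{2Np/(N+\alpha)}(\R^N)}^p$. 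Then Hölder's inequality on $\Omega$ with conjugate exponents $\frac{2N}{N-\alpha}$ and $\frac{2N}{N+\alpha}$ yields
\begin{align*}
\int_\Omega (I_\alpha*\vert u_m\vert^p)\vert u_m\vert^p
&\le \|I_\alpha*\vert u_m\vert^p\|_{L^{2N/(N-\alpha)}(\R^N)}\,\big\|\,\vert u_m\vert^p\,\big\|_{L^{2N/(N+\alpha)}(\Omega)}\\
&\le C\,\|u_m\|_{L^{2Np/(N+\alpha)}(\R^N)}^p\,\|u_m\|_{L^{2Np/(N+\alpha)}(\Omega)}^p.
\end{align*}
By the two observations in the previous paragraph the first factor is bounded uniformly in $m$ and the second tends to $0$, so the left-hand side tends to $0$, which is the assertion.

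I do not anticipate a serious obstacle: the argument is just Hardy--Littlewood--Sobolev combined with Rellich--Kondrachov. The only delicate point is the bookkeeping of exponents, namely verifying that $\frac{2Np}{N+\alpha}\in[2,\frac{2N}{N-2})$ so that both the global Sobolev embedding and the \emph{compact} local embedding into $L^{\frac{2Np}{N+\alpha}}(\Omega)$ are available; this is exactly what $1+\frac{\alpha}{N}\le p<\frac{N+\alpha}{N-2}$ guarantees, and it explains why the upper bound on $p$ must be strict. An essentially equivalent route is to use the bilinear form of the Hardy--Littlewood--Sobolev inequality directly, with both functions taken in $L^{\frac{2N}{N+\alpha}}$, one of them being $\chi_\Omega\vert u_m\vert^p$; this avoids naming $I_\alpha*\vert u_m\vert^p$ but produces the same bound.
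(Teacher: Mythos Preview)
Your proof is correct and is exactly the argument the paper has in mind: the paper does not actually write out a proof of Proposition~\ref{2-2}, it merely records it as an easy consequence of the Hardy--Littlewood--Sobolev inequality (Proposition~\ref{2-0}), and your Hardy--Littlewood--Sobolev plus H\"older plus Rellich--Kondrachov argument is the standard way to make that sentence precise. Your bookkeeping of exponents is accurate, including the observation that the strict upper bound on $p$ is what makes the local embedding compact.
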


Let $u$ be a positive ground state solution of (\ref{1-1}),  following \cite{M.V, X.C} the decay of $u$ is as follows
\begin{Prop}\label{1-2}
Assume that $\alpha=2, p=2,N=3,4,5$ or $\alpha=2,\ N=3,\ 2<p<\frac{7}{3}$, then
$u=O(e^{-\sigma\vert x\vert})$ for $\vert x\vert$ large enough,
where $\sigma$ is a positive constant.
\end{Prop}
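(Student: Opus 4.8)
The plan is to treat a positive ground state $u$ of (\ref{1-1}) as a solution of the linear equation $-\Delta u+u=W(x)\,u$ with potential
\[
W(x):=\bar{Q}\,\bigl(I_\alpha*|u|^p\bigr)(x)\,|u(x)|^{p-2},
\]
to show that $W(x)\to 0$ as $|x|\to\infty$, and then to compare $u$ with an exponentially decaying supersolution. For the starting point I would invoke the regularity theory of \cite{M.V}, valid in exactly the ranges of $(\alpha,p,N)$ in the statement: a positive ground state $u$ of (\ref{1-1}) lies in $C^2(\R^N)\cap H^1(\R^N)\cap L^\infty(\R^N)$, is positive, and satisfies $u(x)\to 0$ as $|x|\to\infty$.

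To show $W(x)\to 0$, note first that since $p\ge 2$ we have $|u|^p\le\|u\|_{L^\infty}^{p-2}\,|u|^2$, so $f:=|u|^p\in L^1(\R^N)\cap L^\infty(\R^N)$ and $f(x)\to 0$ at infinity. Because $\alpha=2$, the Riesz kernel $I_2(x)=A_2|x|^{-(N-2)}$ is positive, integrable near the origin and vanishing at infinity; fixing $\varepsilon>0$, choosing $R$ with $\int_{|y|>R}f<\varepsilon$, and splitting for $|x|>R+1$
\begin{align*}
(I_2*f)(x)=A_2\!\int_{|y|\le R}\frac{f(y)}{|x-y|^{N-2}}\,dy
&+A_2\!\int_{|y|>R,\ |x-y|>1}\frac{f(y)}{|x-y|^{N-2}}\,dy\\
&+A_2\!\int_{|x-y|\le 1}\frac{f(y)}{|x-y|^{N-2}}\,dy,
\end{align*}
the first term is $\le A_2(|x|-R)^{-(N-2)}\|u\|_{L^2}^2\to 0$, the second is $\le A_2\int_{|y|>R}f<A_2\varepsilon$, and the third is $\le C_N\sup_{|y|\ge|x|-1}f(y)\to 0$ by the vanishing of $u$. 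Hence $I_2*|u|^p\to 0$ at infinity; if in addition $p>2$ then also $|u|^{p-2}\to 0$, so in either case $W(x)\to 0$ as $|x|\to\infty$.

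For the comparison step, fix $c\in(0,1)$ and $\sigma\in(0,\sqrt{c})$, and pick $R_1>0$ so large that $W(x)\le 1-c$ for all $|x|\ge R_1$. Using $\Delta e^{-\sigma|x|}=\bigl(\sigma^2-\tfrac{(N-1)\sigma}{|x|}\bigr)e^{-\sigma|x|}$ one sees that $\phi(x):=Ce^{-\sigma|x|}$ satisfies $-\Delta\phi+c\phi=\bigl(c-\sigma^2+\tfrac{(N-1)\sigma}{|x|}\bigr)\phi>0$ on $\{|x|\ge R_1\}$ for every $C>0$, while on the same set $-\Delta u+cu=\bigl(W(x)-(1-c)\bigr)u\le 0$. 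Therefore $w:=\phi-u$ satisfies $-\Delta w+cw\ge 0$ on the exterior domain $\{|x|>R_1\}$. Choosing $C$ so large that $\phi\ge u$ on $\{|x|=R_1\}$ — possible since $u$ is bounded — and noting $w(x)\to 0$ as $|x|\to\infty$, the maximum principle for $-\Delta+c$ on the exterior domain yields $w\ge 0$, that is $u(x)\le Ce^{-\sigma|x|}$ for $|x|\ge R_1$. Enlarging $C$ to absorb the bounded set $\{|x|\le R_1\}$, we conclude $u=O(e^{-\sigma|x|})$ on $\R^N$ for any $\sigma\in(0,1)$.

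The main obstacle is really the input of the first step: before any of this can run one needs the ground state to be a bounded classical solution vanishing at infinity, which is exactly where the Moroz--Van Schaftingen regularity and bootstrap estimates enter, and why the admissible $(\alpha,p,N)$ are restricted so that $p$ is Sobolev-subcritical for the Choquard nonlinearity. The convolution estimate, though elementary, must be arranged so that every piece is genuinely small; note in particular that $I_2*|u|^p$ itself decays only polynomially, which is harmless here because the statement asks only for some $\sigma>0$, not the sharp rate $\sigma=1$ (the latter would require a more delicate comparison exploiting the precise constant in $-\Delta u+u$).
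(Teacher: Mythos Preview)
The paper does not supply its own proof of this proposition: it is simply quoted from \cite{M.V} and \cite{X.C} (see the sentence ``following \cite{M.V, X.C} the decay of $u$ is as follows'' preceding the statement). Your argument is correct and is essentially the standard comparison proof found in those references: rewrite the equation as $-\Delta u+u=W(x)u$, show the effective potential $W$ tends to zero at infinity, and then trap $u$ below an exponential supersolution of $-\Delta+c$ on an exterior domain via the maximum principle.

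One cosmetic slip: in the first piece of your convolution splitting the bound should be $A_2(|x|-R)^{-(N-2)}\|f\|_{L^1}$ with $f=|u|^p$, not $\|u\|_{L^2}^2$; the two agree only when $p=2$. Since you already verified $f\in L^1$ via $|u|^p\le\|u\|_\infty^{p-2}|u|^2$, the argument is unaffected.
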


Next we shall prove a proposition on the weak convergence of a nonlinear operator. Denote
\begin{equation}
T(u,v,w,z)=\int_{\R^N}\int_{\R^N}\frac{u(x)v(x)w(y)z(y)}{\vert x-y\vert^{N-2}}dxdy.
\end{equation}
\begin{Prop}\label{2-9}
Assume that $3\leq N\leq 6$ and that there are three weekly convergent sequences in $H^1(\R^N)$ such that $u_m\rightharpoonup u,\ v_m\rightharpoonup v,\ w_m\rightharpoonup w$ and $z \in\ H^1(\R^N)$, then as $m\rightarrow +\infty$
\begin{equation*}
T(u_m,v_m,w_m,z)\rightarrow T(u,v,w,z).
\end{equation*}
\end{Prop}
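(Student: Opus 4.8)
The plan is to prove the claimed weak continuity of $T$ by decomposing the difference $T(u_m,v_m,w_m,z)-T(u,v,w,z)$ into three telescoping pieces, replacing one argument at a time, and to show that each piece tends to $0$. Writing
\[
T(u_m,v_m,w_m,z)-T(u,v,w,z)=T(u_m-u,v_m,w_m,z)+T(u,v_m-v,w_m,z)+T(u,v,w_m-w,z),
\]
it suffices by symmetry in the first three slots to treat a single model term of the form $T(\varphi_m,\psi_m,\chi_m,z)$ where $\varphi_m\rightharpoonup 0$ in $H^1(\R^N)$ and $\psi_m,\chi_m$ are bounded in $H^1(\R^N)$ (note that after a weakly convergent shift the remaining factors stay bounded). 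The key analytic input is the Hardy--Littlewood--Sobolev inequality (Proposition \ref{2-0}) together with the Sobolev embedding $H^1(\R^N)\hookrightarrow L^s(\R^N)$ for the relevant exponents, which is where the hypothesis $3\le N\le 6$ enters: one needs the products $\varphi_m\psi_m$, $\chi_m z$ etc.\ to lie in $L^q(\R^N)$ with $q=\tfrac{2N}{N+2}$ so that $I_2*(\chi_m z)\in L^{\,2N/(N-2)}$ and the double integral is finite and controlled.

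First I would estimate, using HLS with $\alpha=2$ and $q=\tfrac{2N}{N+2}$,
\[
\bigl| T(\varphi_m,\psi_m,\chi_m,z)\bigr|\le C\,\|\varphi_m\psi_m\|_{L^{q}}\,\|\chi_m z\|_{L^{q}}\le C\,\|\varphi_m\psi_m\|_{L^{q}},
\]
the last bound because $\|\chi_m z\|_{L^q}\le\|\chi_m\|_{L^{2N/(N-2)}}\|z\|_{L^{N}}$ (Hölder) stays bounded when $N\le 6$ so that $L^N\supseteq H^1$, uniformly in $m$. Thus everything reduces to showing $\|\varphi_m\psi_m\|_{L^q}\to0$. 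For this I would split $\R^N$ into a large ball $B_R$ and its complement. On $B_R$: since $\varphi_m\rightharpoonup0$ in $H^1$, it converges strongly in $L^{r}(B_R)$ for any subcritical $r$ (Rellich--Kondrachov), and Hölder gives $\|\varphi_m\psi_m\|_{L^q(B_R)}\le\|\varphi_m\|_{L^{2N/(N-2)}(B_R)}\|\psi_m\|_{L^{N}(B_R)}\to 0$ provided $\tfrac{2N}{N+2}$ can be reached with one strongly-convergent factor and one bounded factor — which again forces $\tfrac{N}{2}\ge$ the conjugate exponent, i.e.\ $N\le 6$. On $\R^N\setminus B_R$: here I would use that $z\in H^1(\R^N)$ has $\|z\|_{L^{N}(\R^N\setminus B_R)}\to0$ as $R\to\infty$, so the factor $\|\chi_m z\|_{L^q(\R^N\setminus B_R)}\le\|\chi_m\|_{L^{2N/(N-2)}}\|z\|_{L^N(\R^N\setminus B_R)}$ is small uniformly in $m$; combining with the uniform bound on $\|\varphi_m\psi_m\|_{L^q}$ from HLS, the tail contribution is $\le C\,\|z\|_{L^N(\R^N\setminus B_R)}$, which is small. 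Choosing $R$ large first and then $m$ large concludes $T(\varphi_m,\psi_m,\chi_m,z)\to0$.

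The main obstacle — and the reason the dimension restriction $N\le 6$ appears rather than merely $N\ge3$ — is precisely the bookkeeping of Hölder exponents: one must verify that the product of a factor in $L^{2N/(N-2)}$ (the Sobolev exponent) with a factor in $L^{p'}$ lands in $L^{2N/(N+2)}$, which requires $p'\le N$, i.e.\ $H^1\hookrightarrow L^{p'}$ with $p'=\tfrac{N}{\,?\,}$ no larger than the Sobolev exponent; this is exactly $\tfrac{2N}{N+2}\cdot\tfrac{N-2}{2N}$-type arithmetic and it is where $N\le 6$ is forced. A secondary technical point is making sure the telescoping is legitimate, i.e.\ that $T$ is finite and jointly continuous enough on bounded sets for the algebraic identity to make sense — this follows once the HLS bound above is in place. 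I would also remark that the hypothesis only asks $z$ to be fixed in $H^1$, not weakly convergent, which is what lets us extract the crucial tail smallness $\|z\|_{L^N(\R^N\setminus B_R)}\to0$; without a fixed $z$ one could not gain the uniform-in-$m$ decay at infinity, and indeed the statement would be false for four weakly (but not strongly) convergent sequences.
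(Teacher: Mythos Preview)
Your overall plan---telescoping and then using HLS together with the fixed slot $z$ to gain compactness---is reasonable, and your third telescoped term $T(u,v,w_m-w,z)$ is indeed fine: with three fixed entries, $(I_2*(uv))\,z$ is a single function in the predual, and weak convergence of $w_m$ suffices. This is essentially the paper's argument for that piece. The difficulty, and where your proposal breaks down, is in the first two terms, where \emph{two} of the sequence slots remain only weakly convergent.

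There are two concrete errors. First, the H\"older split $\|\chi_m z\|_{L^{2N/(N+2)}}\le\|\chi_m\|_{L^{2N/(N-2)}}\|z\|_{L^{N}}$ is false: $\tfrac{N-2}{2N}+\tfrac1N=\tfrac12\ne\tfrac{N+2}{2N}$. The correct companion exponent to $2N/(N-2)$ is $N/2$, not $N$; and the embedding $H^1\hookrightarrow L^N$ you invoke holds only for $N\le4$, not $N\le6$. The same miscalculation recurs in your ball estimate, where in addition you place $\varphi_m$ at the \emph{critical} exponent $2N/(N-2)$, precisely where Rellich--Kondrachov fails. Second, and more seriously, the assertion that ``everything reduces to $\|\varphi_m\psi_m\|_{L^{q}}\to0$'' is not true: take $\varphi_m=\psi_m=\bar u(\cdot-me_1)$ for a fixed bump $\bar u$; then $\varphi_m\rightharpoonup0$ while $\|\varphi_m\psi_m\|_{L^q}$ is a nonzero constant. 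Your ball/tail argument then conflates the two variables of the double integral: Rellich for $\varphi_m$ acts in $x$, the tail smallness of $z$ acts in $y$, and neither controls the mixed region $\{x\notin B_R,\ y\in B_R\}$.

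The paper avoids this by a different regrouping. After an analogous telescoping it handles $T(u,v,w_m,z)$ by duality, showing $(I_2*(uv))\,z\in L^2$ for $N=3,4$ and $(I_2*(uv))\,z\in L^{2N/(N+2)}$ for $N=5,6$, so weak convergence of $w_m$ in $L^2$ resp.\ $L^{2N/(N-2)}$ suffices directly. For the harder term $T(u,v_m-v,w_m,z)$ it applies the Cauchy--Schwarz-type inequality
\[
T(u,v_m-v,w_m,z)^2\;\le\;T(v_m-v,v_m-v,z,z)\,T(u,u,w_m,w_m).
\]
The second factor is merely bounded, while the first equals $\int_{\R^N}\phi_{z^2}(v_m-v)^2$ with $\phi_{z^2}=I_2*z^2\in D^{1,2}(\R^N)$ a \emph{fixed} potential; its tail decays and Rellich (now at a subcritical exponent) handles the interior. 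The Cauchy--Schwarz step is exactly what pairs the weakly null sequence with itself against the fixed function~$z$, which a direct HLS bound on $T(\varphi_m,\psi_m,\chi_m,z)$ cannot do.
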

\begin{proof}
Firstly assume that $u_m\equiv u$ for all $m$, we claim that $T(u,v_m,w_m,z)\rightarrow T(u,v,w,z)$.
\begin{equation*}
T(u,v_m,w_m,z)=T(u,v_m-v,w_m,z)+T(u,v,w_m,z).
\end{equation*}
Since $w_m\rightharpoonup w$ in $H^1(\R^N)$, then $w_m\rightharpoonup w$ in both $L^2(\R^N)$ and $L^\frac{2N}{N-2}(\R^N)$.
%Moreover, set $\phi_{uv}(y)=\int_{\R^N}\frac{u(x)v(x)}{\vert x-y\vert^{N-2}}dx$, then $\phi_{uv}$ is a solution of
%\begin{equation*}
%-\Delta\phi=uv  \ \ \ in\ \R^N.
%\end{equation*}
%Thus we have $\phi_{uv} \in\ W^{2,\frac{N}{N-2}}(\R^N)\hookrightarrow L^p(\R^N)$, where $p \in\ [2, +\infty)$ when $N=3,4$ and $p \in\ [2, \frac{N}{N-4}]$ when $N\geq 5$.
When $N=3,4$, since
\begin{eqnarray*}
&&\int_{\R^N}(\int_{\R^N}\frac{u(x)v(x)}{\vert x-y\vert^{N-2}}dxz(y))^2dy\\
&\leq&(\int_{\R^N}(\int_{\R^N}\frac{u(x)v(x)}{\vert x-y\vert^{N-2}}dx)^\frac{2N}{N-2})^{\frac{N-2}{N}}(\int_{\R^N}\vert z\vert^N)^{\frac{2}{N}},
\end{eqnarray*}
which implies that $\int_{\R^N}\frac{u(x)v(x)}{\vert x-y\vert^{N-2}}dxz(y)\in\ L^2(\R^N)$. Therefore it is easy to prove that $T(u,v,w_m,z)\rightarrow T(u,v,w,z)$.

For $5\leq N\leq 6$, similarly we have
\begin{eqnarray*}
&&\int_{\R^N}(\int_{\R^N}\frac{u(x)v(x)}{\vert x-y\vert^{N-2}}dxz(y))^{\frac{2N}{N+2}}dy\\
&\leq&(\int_{\R^N}(\int_{\R^N}\frac{u(x)v(x)}{\vert x-y\vert^{N-2}}dx)^{\frac{N}{2}})^{\frac{4}{N+2}}(\int_{\R^N}\vert z\vert^\frac{2N}{N-2})^{\frac{N-2}{N+2}}.
\end{eqnarray*}
As a consequence, $\int_{\R^N}\frac{u(x)v(x)}{\vert x-y\vert^{N-2}}dxz(y)\in\ L^{\frac{2N}{N+2}}$ from which we get
$T(u,v,w_m,z)\rightarrow T(u,v,w,z)$.

In addition, using Holder inequality we have
\begin{eqnarray*}
&&T(u,v_m-v,w_m,z)^2\\
&\leq&\int_{\R^N}\int_{\R^N}\frac{(v_m-v)^2(x)z^2(y)}{\vert x-y\vert^{N-2}}dxdy\int_{\R^N}\int_{\R^N}\frac{u^2(x)w_m^2(y)}{\vert x-y\vert^{N-2}}dxdy\\
&=&T(v_m-v,v_m-v,z,z)T(u,u,w_m,w_m).
\end{eqnarray*}
It is easy to see that $T(u,u,w_m,w_m)$ is bounded.

Set $\phi_{u^2}(y)=\int_{\R^N}\frac{u^2(x)}{\vert x-y\vert^{N-2}}dx$, then $\phi_{u^2} \in D^{1,2}(\R^N)$ is a solution of
\begin{equation*}
-\Delta\phi=u^2  \ \ \ in\ \R^N
\end{equation*}
and we have, as $m\rightarrow +\infty$,
\begin{equation*}
T(v_m-v,v_m-v,z,z)=\int_{\R^N}\phi_{z^2}(v_m-v)^2dx\rightarrow 0.
\end{equation*}
Thus we complete the claim.
Now consider that
\begin{eqnarray*}
T(u_m,v_m,w_m,z)=T(u,v_m,w_m,z)+T(u_m-u,v_m,w_m,z)
\end{eqnarray*}
$T(u,v_m,w_m,z)\rightarrow T(u,v,w,z)$ and with respect to the above discussion we get that $T(u_m-u,v_m,w_m,z)\rightarrow 0$ as $m\rightarrow +\infty$.
\end{proof}

\begin{Lem}\label{2-8}
Assume that $3\leq N\leq 6,\ \alpha=2$ and that $\{u_m\}$ is bounded in $H^1(\R^N)$. If $u_m\rightarrow u$ almost everywhere on $\R^N$ as $m\rightarrow +\infty$, then
\begin{eqnarray*}
T(u_m,u_m,u_m,u_m)-T(u,u,u,u)=T(u_m-u,u_m-u,u_m-u,u_m-u)+o(1).
\end{eqnarray*}
\end{Lem}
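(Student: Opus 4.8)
The plan is to run the classical Brezis--Lieb argument on the quartic nonlocal form $T$, using its multilinearity together with Proposition \ref{2-9} to kill the cross terms. First I would observe that, since $\{u_m\}$ is bounded in $H^1(\R^N)$ and $u_m\to u$ a.e., a standard subsequence argument gives $u_m\rightharpoonup u$ weakly in $H^1(\R^N)$; hence the sequence $v_m:=u_m-u$ satisfies $v_m\rightharpoonup 0$ in $H^1(\R^N)$ and $v_m\to 0$ a.e. The key structural facts are that $T$ is linear in each of its four slots and invariant under the symmetries $T(a,b,c,d)=T(b,a,c,d)=T(a,b,d,c)=T(c,d,a,b)$ (the last one by interchanging the integration variables, since $|x-y|=|y-x|$). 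Writing $u_m=v_m+u$ and expanding the $16$ resulting terms, then collecting them according to these symmetries, yields
\begin{align*}
T(u_m,u_m,u_m,u_m)={}&T(u,u,u,u)+T(v_m,v_m,v_m,v_m)\\
&{}+4\,T(v_m,u,u,u)+4\,T(v_m,v_m,v_m,u)\\
&{}+2\,T(v_m,v_m,u,u)+4\,T(v_m,u,v_m,u).
\end{align*}
Thus the lemma reduces to showing that each of the last four (mixed) terms is $o(1)$ as $m\to\infty$, while $T(v_m,v_m,v_m,v_m)$ is kept untouched (it is \emph{not} claimed to vanish, which is precisely the point of the statement and the reason the naive Brezis--Lieb splitting does not suffice).

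Next I would dispose of the four mixed terms by invoking Proposition \ref{2-9}, in each case taking $z=u$ fixed in the last slot and putting suitable weakly convergent sequences (constant sequences being admissible) in the first three slots, after using the symmetries of $T$ to move the non-sequential argument $u$ into the $z$-slot. Concretely: for $T(v_m,u,u,u)$ use the triple $(v_m,u,u)\rightharpoonup(0,u,u)$; for $T(v_m,v_m,v_m,u)$ use $(v_m,v_m,v_m)\rightharpoonup(0,0,0)$; for $T(v_m,v_m,u,u)$ use $(v_m,v_m,u)\rightharpoonup(0,0,u)$; and for $T(v_m,u,v_m,u)$ use $(v_m,u,v_m)\rightharpoonup(0,u,0)$. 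In every case Proposition \ref{2-9} gives convergence to a value of $T$ having the factor $0$ in one of its first two arguments, hence to $0$. Plugging this back into the displayed expansion gives exactly
\[
T(u_m,u_m,u_m,u_m)-T(u,u,u,u)=T(v_m,v_m,v_m,v_m)+o(1),
\]
which is the claimed identity (recall $v_m=u_m-u$).

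The main thing to be careful about is the bookkeeping: verifying that the $16$-fold expansion really collapses to the six groups above with the stated multiplicities, and checking that the hypotheses of Proposition \ref{2-9} are genuinely met for each of the four choices, i.e. that one can always arrange the fixed function $u$ into the ``$z$'' position by the symmetries of $T$. I would also flag that the range $3\le N\le 6$, $\alpha=2$ imposed here is exactly the one under which Proposition \ref{2-9} is available, so no additional restriction on $N$ or $\alpha$ is introduced. Apart from that, the argument is routine multilinear algebra combined with the compactness already established in Proposition \ref{2-9}.
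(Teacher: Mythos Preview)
Your proof is correct and uses essentially the same idea as the paper: both exploit the multilinearity and symmetries of $T$ together with Proposition \ref{2-9} (with the fixed function $u$ placed in the $z$-slot) to show that every cross term involving at least one $u$ and at least one $v_m=u_m-u$ vanishes in the limit. The only cosmetic difference is that the paper peels off one slot at a time---writing $T(u_m,u_m,u_m,u_m)=T(u_m,u_m,u_m,u_m-u)+T(u_m,u_m,u_m,u)$, applying Proposition \ref{2-9} to the second summand, and then iterating on the first---whereas you expand all sixteen terms at once; the underlying mechanism is identical.
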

\begin{proof}
\begin{eqnarray*}
&&T(u_m,u_m,u_m,u_m)\\
&=&T(u_m,u_m,u_m,u_m-u)+T(u_m,u_m,u_m,u),\\
&=&T(u_m,u_m,u_m,u_m-u)+T(u,u,u,u)+o(1),\\
&=&T(u_m-u,u_m-u,u_m-u,u_m-u)+T(u,u,u,u)+o(1).
\end{eqnarray*}
\end{proof}
\begin{Rem}
For $N=3, \alpha=2$ and $2<p<\frac{7}{3}$ the results of Proposition \ref{2-9} and Lemma \ref{2-8} are also true by a similar calculation.
\end{Rem}

Next, we establish a global compactness lemma.
\begin{Lem}\label{2-3}
Let $\{u_m\} \subset H^1(\R^N)$ be a sequence such that as $m\rightarrow +\infty$
\begin{enumerate}
\item[(i)] $I(u_m)\rightarrow C$ ,\\
\item[(ii)] $I^{'}(u_m)\rightarrow 0$ \ \ \  in\ $H^{-1}(\R^N)$.
\end{enumerate}
Then, there exists a number $k \in \N$, $k$ sequences of points $\{y^j_m\}$ such that $\vert y^j_m\vert\rightarrow +\infty$ as $m\rightarrow +\infty$, $1\leq j\leq k$, $k+1$ sequence of functions $\{u^j_m\}\subset H^1(\R^N)$, $0\leq j\leq k$, such that for some subsequences
\begin{equation*}
\begin{cases}
u^0_m\equiv u_m\rightharpoonup u^0,\\
u^j_m=(u^{j-1}_m-u^{j-1})(x-y^j_m)\rightharpoonup u^j,\\
1\leq j\leq k.
\end{cases}
\end{equation*}
where $u^0$ is a solution of (\ref{1-0}) and $u^j, 1\leq j\leq k$ are nontrivial positive solutions of (\ref{1-1}). Moreover as $m\rightarrow +\infty$
\begin{eqnarray*}
J(u_m)&\rightarrow& \sum^k_{j=0} J(u^j),\\
I(u_m)&\rightarrow& I(u^0)+\sum^k_{j=1}I^*(u^j).
\end{eqnarray*}
\end{Lem}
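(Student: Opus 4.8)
\emph{Proof strategy.} The plan is the standard iterated bubbling (concentration--compactness) decomposition of a Palais--Smale sequence, the only new ingredient being the treatment of the nonlocal quartic term, for which Proposition~\ref{2-9} and Lemma~\ref{2-8} have been prepared. First I would note that $\{u_m\}$ is bounded in $H^1(\R^N)$: by (i)--(ii),
\[
J(u_m)=4I(u_m)-\langle I'(u_m),u_m\rangle=4C+o(1)+o(1)\|u_m\|_{H^1},
\]
so $\|u_m\|_{H^1}$ is bounded. Passing to a subsequence, $u_m\rightharpoonup u^0$ in $H^1(\R^N)$, $u_m\to u^0$ in $L^q_{\mathrm{loc}}(\R^N)$ for $2\le q<\tfrac{2N}{N-2}$ and a.e.\ on $\R^N$; when the Palais--Smale sequence is taken nonnegative (as it will be in the application) also $u^0\ge0$. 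To see that $u^0$ solves (\ref{1-0}) I pass to the limit in $\langle I'(u_m),\varphi\rangle\to0$ for $\varphi\in C^\infty_0(\R^N)$: the quadratic part converges by weak convergence, while $\int_{\R^N}Q(x)(I_\alpha*|u_m|^2)u_m\varphi\to\int_{\R^N}Q(x)(I_\alpha*|u^0|^2)u^0\varphi$ by Proposition~\ref{2-9} (with fixed slot $z=Q\varphi$, which, as in the proof of that proposition, only needs to lie in the relevant Lebesgue space).

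The heart of the proof is a splitting at infinity. Put $v^1_m:=u_m-u^0\rightharpoonup0$. Using Lemma~\ref{2-8} — carrying the weight $Q(x)$ along — together with the fact that the mass of $v^1_m$ escapes to infinity (so that $\int_{\R^N}(Q(x)-\bar{Q})\,I_\alpha*|v^1_m|^2|v^1_m|^2\to0$, by splitting $\R^N$ into a large ball, where $v^1_m$ vanishes by Proposition~\ref{2-2}, and its complement, where $|Q-\bar{Q}|$ is small), one obtains
\[
\int_{\R^N}Q(x)I_\alpha*|u_m|^2|u_m|^2=\int_{\R^N}Q(x)I_\alpha*|u^0|^2|u^0|^2+\bar{Q}\int_{\R^N}I_\alpha*|v^1_m|^2|v^1_m|^2+o(1),
\]
and the analogous identity for the derivative. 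Combined with the weak-convergence identity $J(u_m)=J(u^0)+J(v^1_m)+o(1)$, this gives
\[
I(u_m)=I(u^0)+I^*(v^1_m)+o(1),\qquad I'(u_m)=I'(u^0)+(I^*)'(v^1_m)+o(1)\ \text{in }H^{-1}(\R^N).
\]
Since $I'(u^0)=0$, the sequence $\{v^1_m\}$ is a bounded Palais--Smale sequence for the limiting functional $I^*$ at level $C-I(u^0)$.

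Now I iterate. If $v^1_m\to0$ strongly in $H^1(\R^N)$, stop with $k=0$. Otherwise, by Lions' lemma there are $\delta>0$ and points $y^1_m$ with $\int_{B_1(y^1_m)}|v^1_m|^2\ge\delta$; since $v^1_m\to0$ in $L^2_{\mathrm{loc}}$ (Rellich), necessarily $|y^1_m|\to+\infty$. Set $u^1_m(x):=v^1_m(x-y^1_m)$; by translation invariance of $I^*$ this is still Palais--Smale for $I^*$, so along a subsequence $u^1_m\rightharpoonup u^1\ne0$, a nontrivial solution of (\ref{1-1}); moreover $u^1$ is the weak limit of $u_m(\cdot-y^1_m)\ge0$ (the correction $u^0(\cdot-y^1_m)\rightharpoonup0$), hence $u^1\ge0$, hence $u^1>0$ by the strong maximum principle. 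Replacing $v^1_m$ by $v^2_m:=u^1_m-u^1$ and repeating yields $u^j_m,y^j_m,u^j$ as in the statement; one checks inductively (via Rellich and the non-triviality of the profiles) that each $|y^j_m|\to\infty$ and that each $u^j$, being the weak limit of $u_m$ translated by the cumulative shift up to corrections that converge weakly to $0$, is a nontrivial positive solution of (\ref{1-1}); and applying the splitting at each stage gives, for every $n\ge1$,
\[
J(u_m)=\sum_{j=0}^{n}J(u^j)+J(v^{n+1}_m)+o(1),\qquad I(u_m)=I(u^0)+\sum_{j=1}^{n}I^*(u^j)+I^*(v^{n+1}_m)+o(1).
\]
The iteration terminates: any nontrivial solution $w$ of (\ref{1-1}) satisfies $J(w)=\bar{Q}\int_{\R^N}I_\alpha*|w|^2|w|^2\le C_0 J(w)^2$ by (\ref{2-1}), so $I^*(w)=\tfrac14 J(w)\ge\tfrac1{4C_0}=:\beta>0$; since $I(u^0)=\tfrac14 J(u^0)\ge0$ and $\liminf_m I^*(v^{n+1}_m)\ge0$ (approximate critical points of $I^*$ have energy $\tfrac14 J+o(1)$), the second identity forces $\sum_{j=1}^{n}I^*(u^j)\le C-I(u^0)\le C$, so there are at most $C/\beta$ bubbles. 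Taking $k$ to be the first index with $v^{k+1}_m\to0$ strongly in $H^1(\R^N)$, the two displayed identities become the asserted limits.

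The main obstacle is precisely the nonlocal Brezis--Lieb splitting of the quartic term and its differential analogue: because the Riesz kernel couples far-apart regions, the naive Brezis--Lieb argument (a.e.\ convergence plus equi-integrability) does not close, and this is exactly what Lemma~\ref{2-8} and Proposition~\ref{2-9} are designed to replace — which is also why the restriction $3\le N\le 6$ (resp.\ $N=3$, $2<p<\tfrac73$) enters, being the range in which the kernel-weighted quantities used there have the required integrability. The remaining points — trading $Q(x)$ for $\bar{Q}$ on the escaping bubbles and checking that the cumulative shifts escape to infinity so that the profiles are well separated — are routine once the concentration of mass at infinity is in hand.
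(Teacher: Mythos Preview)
Your argument is correct and follows the same iterated concentration--compactness scheme as the paper, with the nonlocal Brezis--Lieb splitting supplied by Lemma~\ref{2-8} and Proposition~\ref{2-9} and the replacement of $Q$ by $\bar Q$ on the escaping part via Proposition~\ref{2-2}, exactly as there. The one technical difference is in how the translation points are located: you invoke Lions' vanishing lemma (positive $L^2$-mass in a unit ball, then $|y^1_m|\to\infty$ by Rellich), whereas the paper tiles $\R^N$ by unit hypercubes $\Omega_i$, sets $d_m=\sup_i\bigl(\int_{\Omega_i}\bar Q\,I_\alpha*|v_m|^2|v_m|^2\bigr)^{1/4}$, and shows $d_m\ge\gamma>0$ directly from the nonlocal term via (\ref{2-1}); either device produces a nontrivial weak limit after translation, and the remaining steps (iteration, energy additivity, termination by the uniform lower bound $I^*(u^j)\ge\beta>0$) coincide.
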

\begin{proof}
Our proof is similar to the those in \cite{B.V} and \cite{Z.X}.
Since $\{u_m\}$ is $(PS)_C$ sequence of $I(u)$, it is easy to prove that $u_m$ is bounded in $H^1(\R^N)$. Then we can assume that
$u_m\rightharpoonup u^0$ in $H^1(\R^N)$. Set $v_m=u_m-u^0$, then  $v_m\rightharpoonup 0$ in $H^1(\R^N)$.
If $v_m\rightarrow 0$ in $H^1(\R^N)$, we are done. Now suppose that $v_m\nrightarrow 0$ in $H^1(\R^N)$.
By Proposition \ref{2-2} and Lemma \ref{2-8}, we get
\begin{eqnarray*}
I(v_m)&=&I^*(v_m)+o(1),\\
I'(v_m)&=&(I^*)'(v_m)+o(1)=o(1).
\end{eqnarray*}
Moreover there exists $\lambda \in(0,+\infty)$ such that $I^*(v_m)\geq \lambda>0$ for $m$ large enough.
In fact, otherwise $I^*(v_m)=o(1), \ \ (I^*)'(v_m)=o(1)$ would imply $\|v_m\|_{H^1}\rightarrow 0$, which is a contradiction to $v_m\nrightarrow 0$ in $H^1(\R^N)$.
Let us decompose $\R^N$ into N-dim hypercubes $\Omega_i$ and define
\begin{equation}
d_m=\sup_{\Omega_i}(\int_{\Omega_i}\bar{Q}I_\alpha*\vert v_m\vert^2\vert v_m\vert^2)^\frac{1}{4}.
\end{equation}
Claim $d_m\geq \gamma>0$. Since $(I^*)'(v_m)=o(1)$ as $m\rightarrow +\infty$, then
\begin{eqnarray*}
\|v_m\|_{H^1}&=&\int_{\R^N}\bar{Q}I_\alpha*\vert v_m\vert^2\vert v_m\vert^2+o(1),\\
I^*(v_m)&=&\frac{1}{4}\int_{\R^N}\bar{Q}I_\alpha*\vert v_m\vert^2\vert v_m\vert^2+o(1).
\end{eqnarray*}
Thus, we have
\begin{eqnarray*}
4I^*(v_m)+o(1)&=&\int_{\R^N}\bar{Q}I_\alpha*\vert v_m\vert^2\vert v_m\vert^2\\
&=&\sum_i\int_{\Omega_i}\bar{Q}I_\alpha*\vert v_m\vert^2\vert v_m\vert^2\\
&\leq& d^2_m\sum_i(\int_{\Omega_i}\bar{Q}I_\alpha*\vert v_m\vert^2\vert v_m\vert^2)^\frac{1}{2}\\
&\leq& C_Nd^2_m\sum_i\|v_m\|^2_{H^1(\Omega_i)}\ \ (by\ (\ref{2-1}))\\
&=&C_Nd^2_m\|v_m\|^2_{H^1(\R^N)},
\end{eqnarray*}
where $C_N$ is a positive constant.
Since $I^*(v_m)\geq \lambda>0$ then $d_m\geq \gamma>0$.
Now, let us call $y_m$ the center of $\Omega_m$ such that
\begin{equation*}
(\int_{\Omega_m}\bar{Q}I_\alpha*\vert v_m\vert^2\vert v_m\vert^2)^\frac{1}{4}\geq d_m-\frac{1}{m}
\end{equation*}
and put $\widetilde{v_m}=v_m(x+y_m)$. It is easy to prove that $\widetilde{v_m}\rightharpoonup v_0\not\equiv 0$. In fact, letting $\Omega$ be the hypercube centered at the origin,then we have
\begin{equation}\label{2-4}
(\int_{\Omega}\bar{Q}I_\alpha*\vert \widetilde{v_m}\vert^2\vert \widetilde{v_m}\vert^2)^\frac{1}{4}=(\int_{\Omega_m}\bar{Q}I_\alpha*\vert v_m\vert^2\vert v_m\vert^2)^\frac{1}{4}\geq d_m-\frac{1}{m}\geq \gamma+o(1)
\end{equation}
If $\widetilde{v_m}\rightharpoonup 0$, then $\int_{\Omega}\bar{Q}I_\alpha*\vert \widetilde{v_m}\vert^2\vert \widetilde{v_m}\vert^2\rightarrow 0$ as $m\rightarrow +\infty$, we get a contradiction. Iterating the above procedure, if $\widetilde{v_m}\rightarrow v_0$ we are done, otherwise setting $w_m=\widetilde{v_m}-v_0\rightharpoonup 0$ and $w_m\nrightarrow 0$, continue the above procedure. Since $\{u_m\}$ is bounded away from zero, by Brezis-Lieb lemma and Lemma \ref{2-8} we know that the iteration must terminate at some index $k>0$ and
\begin{eqnarray*}
I(u_m)&=&I(u^0)+\sum^k_{j=1}I^*(u^j)+o(1),\\
J(u_m)&=&\sum^k_{j=0}J(u^j)+o(1).
\end{eqnarray*}
Moreover we claim as $m\rightarrow +\infty$, $\vert y_m\vert\rightarrow +\infty$, otherwise $\vert y_m\vert$ is bounded, we can choose a bounded domain $\Sigma$ such that  $\bigcup\Omega_m\subset \Sigma$.
As a consequence, by $v_m\rightharpoonup 0$, we get $\int_{\Sigma}\bar{Q}I_\alpha*\vert v_m\vert^2\vert v_m\vert^2\rightarrow 0$, which is a contradiction to (\ref{2-4}).
\end{proof}
\begin{Rem}\label{2-5}
For $1\leq j\leq k$, $J(u^j)\geq {M^*}^2$ and $I^*(u^j)\geq \frac{1}{4}{M^*}^2$.
If $c \in (0,\frac{1}{4}{M^*}^2)$, then we can see that $k=0$ and therefore
\begin{equation*}
u_m\rightarrow u^0\not\equiv 0.
\end{equation*}
If $c \in [\frac{1}{4}{M^*}^2,\frac{1}{2}{M^*}^2)$, then either $k=0$ or $k=1$.
\begin{equation*}
\begin{cases}
u_m(x)\rightarrow u^0(x)\ \ \ \ k=0,\\
u_m(x)=u^0(x)+u(x+y_m)+w_m(x)\ \ \ k=1,
\end{cases}
\end{equation*}
where $w_n(x)\rightarrow 0$ in $H^1(\R^N)$.
\end{Rem}
\begin{Lem}\label{2-6}
Assume that $\{u_m\}$ is a $(PS)_c$ sequence of $I(u)$ and $M=M^*$. If $0<c<\frac{1}{4}{M^*}^2$ or $\frac{1}{4}{M^*}^2<c<\frac{1}{2}{M^*}^2$,
then $\{u_m\}$ contains a strongly convergent subsequence.
\end{Lem}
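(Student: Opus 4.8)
The plan is to run the $(PS)_c$ sequence through the global compactness decomposition of Lemma~\ref{2-3} and then use the resulting energy identities, together with Remark~\ref{2-5} and the uniqueness (up to translation) of the positive solution of the limiting problem (\ref{1-1}), to rule out any splitting of mass at infinity. Once the number of bubbles vanishes, strong convergence comes for free because $H^1(\R^N)$ is a Hilbert space.

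First I would apply Lemma~\ref{2-3} to $\{u_m\}$. Along a subsequence we get $k\in\N$, points $\{y^j_m\}$ with $|y^j_m|\to+\infty$, and profiles $u^0,\dots,u^k$ such that $u_m\rightharpoonup u^0$ in $H^1(\R^N)$, $u^0$ solves (\ref{1-0}), each $u^j$ with $1\le j\le k$ is a nontrivial positive solution of (\ref{1-1}), and
\[
\|u_m\|_{H^1}^2=J(u_m)\longrightarrow\sum_{j=0}^k J(u^j),\qquad I(u_m)\longrightarrow I(u^0)+\sum_{j=1}^k I^*(u^j).
\]
Testing (\ref{1-0}) against $u^0$ gives $J(u^0)=\int_{\R^N}Q(x)I_\alpha*|u^0|^2|u^0|^2$, so $I(u^0)=\tfrac14 J(u^0)\ge0$; moreover, if $u^0\not\equiv0$, then normalizing $|u^0|$ to lie in $\V$ and using the hypothesis $M=M^*$ yields $J(u^0)\ge M^2=(M^*)^2$, hence $I(u^0)\ge\tfrac14(M^*)^2$. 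On the other hand, by the uniqueness results of \cite{W.T} (when $p=2$) and \cite{X.C} (when $2<p<\tfrac73$), each $u^j$ with $1\le j\le k$ is a translate of the ground state of (\ref{1-1}), so $I^*(u^j)=\tfrac14(M^*)^2$. Altogether $c=I(u^0)+\tfrac{k}{4}(M^*)^2$.

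Next I would pin down $k$. If $0<c<\tfrac14(M^*)^2$, Remark~\ref{2-5} already gives $k=0$. If $\tfrac14(M^*)^2<c<\tfrac12(M^*)^2$, Remark~\ref{2-5} gives $k\in\{0,1\}$, and I claim $k=1$ is impossible: when $k=1$ and $u^0\equiv0$ we get $c=\tfrac14(M^*)^2$, contradicting $c>\tfrac14(M^*)^2$; when $k=1$ and $u^0\not\equiv0$ we get $c=I(u^0)+\tfrac14(M^*)^2\ge\tfrac12(M^*)^2$, contradicting $c<\tfrac12(M^*)^2$. Hence $k=0$ in every case, and therefore $J(u_m)\to J(u^0)$, i.e.\ $\|u_m\|_{H^1}\to\|u^0\|_{H^1}$. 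Combined with $u_m\rightharpoonup u^0$ weakly in the Hilbert space $H^1(\R^N)$, this gives $\|u_m-u^0\|_{H^1}^2=\|u_m\|_{H^1}^2-2\langle u_m,u^0\rangle_{H^1}+\|u^0\|_{H^1}^2\to0$, so $u_m\to u^0$ strongly, which is the assertion.

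The delicate point, and the only place where the two standing hypotheses are really used, is the exclusion of a single escaping bubble in the range $\tfrac14(M^*)^2<c<\tfrac12(M^*)^2$: there the uniqueness of the positive solution of (\ref{1-1}) is what fixes the energy of the bubble at the ground state value $\tfrac14(M^*)^2$ (leaving no room for an intermediate concentrating profile), while $M=M^*$ is what forces $I(u^0)\ge\tfrac14(M^*)^2$ once $u^0\ne0$. If either were dropped, a nontrivial profile concentrating at infinity would produce a $(PS)_c$ sequence in this band with no strongly convergent subsequence, and the lemma would fail.
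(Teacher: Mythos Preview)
Your argument is correct and follows essentially the same route as the paper: apply the global compactness Lemma~\ref{2-3}, invoke Remark~\ref{2-5} to restrict to $k\in\{0,1\}$, and then use the hypothesis $M=M^*$ together with the uniqueness of the positive solution of (\ref{1-1}) to exclude $k=1$ in the second range. Your write-up is in fact more detailed than the paper's (you spell out why $I(u^0)\ge\tfrac14(M^*)^2$ when $u^0\not\equiv0$ and why $k=0$ forces strong convergence via norm convergence in the Hilbert space), but the underlying logic is identical.
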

\begin{proof}
If $0<c<\frac{1}{4}{M^*}^2$ we are done. If $\frac{1}{4}{M^*}^2<c<\frac{1}{2}{M^*}^2$, since $M=M^*$, we get $J(u^0)\geq \frac{1}{4}{M^*}^2$, then $u^0\equiv 0$ or $u\equiv 0$.
If $u\equiv 0$, we are done. Otherwise, $u^0\equiv 0$ and $u\not\equiv 0$ is a positive solution of (\ref{1-1}). By the uniqueness of positive solutions of (\ref{1-1}), $I^*(u)=\frac{1}{4}{M^*}^2$ which contradicts to the value of $c$.
\end{proof}

Another form of Lemma \ref{2-6} is as following
\begin{Lem}\label{2-7}
Assume that $\{u_m\}\subset \V$ such that
\begin{equation}
\begin{cases}
(i) \ \ J(u_m)\rightarrow c \ \ \ \in (0,M^*) \ or \ c \ \in (M^*,\sqrt{2}M^*),\\
(ii)\ \ dJ|_{\V}(u_m)\rightarrow 0.
\end{cases}
\end{equation}
then, $J|_{\V}$ has a critical point $v_0$ such that $J(v_0)=c$.
\end{Lem}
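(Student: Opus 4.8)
The plan is to convert a constrained Palais--Smale sequence for $J|_{\V}$ at level $c$ into an unconstrained Palais--Smale sequence for $I$ at level $\tfrac14 c^2$ by rescaling along the fibering map $u\mapsto\sqrt{J(u)}\,u$, and then to apply Lemma \ref{2-6}. Write $G(u)=\int_{\R^N}Q(x)\,I_\alpha*|u|^2\,|u|^2$, so that $\V=\{u\ge 0:\ G(u)=1\}$, $J(u)=\|u\|_{H^1}^2$ and $I(u)=\tfrac12 J(u)-\tfrac14 G(u)$. From (i), $\|u_m\|_{H^1}^2=J(u_m)\to c$, so $\{u_m\}$ is bounded in $H^1(\R^N)$, and $c>0$ in both admissible ranges. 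Since $\langle G'(u),u\rangle=4G(u)=4$ on $\V$, one has $\|G'(u_m)\|_{H^{-1}}\ge 4/\|u_m\|_{H^1}$, bounded below, so $\V$ is a genuine $C^1$ constraint surface near the $u_m$'s (the sign condition will be recovered at the end, as $J$ and $G$ depend only on $|u|$); hence (ii) yields Lagrange multipliers $\mu_m\in\R$ with $\rho_m:=J'(u_m)-\mu_m G'(u_m)\to0$ in $H^{-1}(\R^N)$. Testing against $u_m$ and using $\langle J'(u_m),u_m\rangle=2J(u_m)$, $\langle G'(u_m),u_m\rangle=4$ together with the boundedness of $\{u_m\}$ gives $\mu_m=\tfrac12 J(u_m)+o(1)\to\tfrac c2>0$.

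Next I would rescale. The homogeneities $J'(tu)=tJ'(u)$, $G'(tu)=t^3G'(u)$ give $I'(tu)=\tfrac12 tJ'(u)-\tfrac14 t^3G'(u)$. Set $t_m:=\sqrt{2\mu_m}$ (so that $t_m\to\sqrt c>0$ and $t_m^2=J(u_m)+o(1)$) and $w_m:=t_m u_m\ge0$. Substituting $J'(u_m)=\mu_m G'(u_m)+\rho_m$,
\[
I'(w_m)=\Bigl(\tfrac12 t_m\mu_m-\tfrac14 t_m^3\Bigr)G'(u_m)+\tfrac12 t_m\rho_m=\tfrac12 t_m\rho_m\longrightarrow 0 \ \text{ in } H^{-1}(\R^N),
\]
because the choice $t_m^2=2\mu_m$ annihilates the bracket and $t_m$ stays bounded. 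Moreover, since $G(u_m)=1$,
\[
I(w_m)=\tfrac{t_m^2}{2}J(u_m)-\tfrac{t_m^4}{4}=\tfrac14 J(u_m)^2+o(1)\longrightarrow\tfrac14 c^2 .
\]
Thus $\{w_m\}$ is a nonnegative $(PS)_{c^2/4}$ sequence for $I$, and the level lands exactly in the ranges of Lemma \ref{2-6}: if $c\in(0,M^*)$ then $\tfrac14 c^2\in\bigl(0,\tfrac14{M^*}^2\bigr)$, while if $c\in(M^*,\sqrt2\,M^*)$ then $\tfrac14 c^2\in\bigl(\tfrac14{M^*}^2,\tfrac12{M^*}^2\bigr)$.

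Finally, by Lemma \ref{2-6} (as there, we work in the case $M=M^*$) a subsequence of $\{w_m\}$ converges strongly in $H^1(\R^N)$ to some $w$ with $I(w)=\tfrac14 c^2$, $I'(w)=0$ and $w\ge0$. Since $t_m\to\sqrt c\neq0$, the corresponding subsequence of $\{u_m\}$ converges strongly to $v_0:=c^{-1/2}w$ in $H^1(\R^N)$; passing to the limit in $G(u_m)=1$ and $J(u_m)\to c$ gives $v_0\in\V$ and $J(v_0)=c$. Rescaling $I'(w)=0$ back shows $v_0$ solves $-\Delta v_0+v_0=c\,Q(x)\,I_\alpha*|v_0|^2\,v_0$, so for every tangent direction $v$ (i.e. $\langle G'(v_0),v\rangle=0$) we get $\langle J'(v_0),v\rangle=\tfrac c2\langle G'(v_0),v\rangle=0$; that is, $v_0$ is a critical point of $J|_{\V}$ with $J(v_0)=c$. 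The only step needing real care is the bookkeeping in the rescaling — checking that the multipliers $\mu_m$ stay bounded away from $0$ (which uses $c>0$) and choosing $t_m$ so that the leading term cancels exactly, leaving the genuine error $\tfrac12 t_m\rho_m=o(1)$; the substantive compactness content is entirely absorbed into Lemma \ref{2-6}.
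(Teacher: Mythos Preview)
Your argument is correct and is precisely the conversion the paper has in mind: the paper does not give a separate proof of Lemma~\ref{2-7} but simply declares it to be ``another form'' of Lemma~\ref{2-6}, i.e.\ the same statement read through the fibering rescaling $u\mapsto \sqrt{J(u)}\,u$ between the constraint manifold $\V$ and the free functional $I$. You have filled in the details of that rescaling (Lagrange multipliers $\mu_m\to c/2$, the choice $t_m^2=2\mu_m$ to kill the leading term, and the matching of levels $c\leftrightarrow \tfrac14 c^2$) exactly as intended; note, as you already observe, that the hypothesis $M=M^*$ from Lemma~\ref{2-6} is implicitly in force in the second range, and is not needed in the first range by Remark~\ref{2-5}.
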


\section{Proof of Theorem \ref{1-3}}
It is easy to see that $0<M\leq M^*$, from the fact that $Q(x)\rightarrow \bar{Q}$ as $\vert x\vert\rightarrow +\infty$. Under the assumptions in Remark \ref{1-4}, when the ground state is achieved it is easy to see that $ M^*< M$ which is a contradiction.

If $M<M^*$, there must exist a sequence $\{u_m\}\subset \V$ such that as $m\rightarrow +\infty$
\begin{equation}
J(u_m)\rightarrow M\ \ \ \ dJ|_{\V}(u_m)\rightarrow  0.
\end{equation}
Consequently, by Lemma \ref{2-7}, $J|_{\V}$ has a critical point $v_0 \in \V$ such that $J(v_0)=M,\ \ \  dJ|_{\V}(v_0)=0$.
Taking $u^0=M^\frac{1}{2}v_0$, it is easy to see that $u^0$ is a positive solution of (\ref{1-0}) and $I(u^0)=\frac{1}{4}M^2$.
If $M=M^*$ and M can be achieved in $\V$, there also a positive solution of (\ref{1-0}). Next we always assume $M=M^*$ and $M$ can not be achieved.
Defined $\beta(u):\ H^1(\R^N)\rightarrow \R^N$ as following
\begin{equation}
\beta(u)=\int_{\R^N}u^2\chi(\vert x\vert)\cdot x,
\end{equation}
where
\begin{eqnarray}\chi(t)=
\begin{cases}
1\ \ \ \ \ \ \  0\leq t\leq 1\\
\frac{1}{t}\ \ \ \ \ \ t>1.
\end{cases}
\end{eqnarray}
Let $\overline{\V}$ be defined as $\overline{\V}=\{u|u\in \V, \beta(u)=0\}$ and $\bar{u}$ be a positive solution of (\ref{1-1}) achieving its maximum at the origin.
\begin{Lem}\label{3-0}
Let $\bar{M}=\inf\{J(u)\,|\,u\in \overline{\V}\}$. If $M=M^*$ can not be achieved in $\V$, then $M<\bar{M}$ and there exists $R>0$ such that
\begin{eqnarray*}
\begin{cases}
(i)\ \ \ J(h(y))\ \in \ \ (M,\frac{M+\bar{M}}{2})\ \ \ if \ \vert y\vert\geq R,\\
(ii)\ \ \ (\beta\cdot h(y),y)>0 \ \ \ \ \ \ \ \ \ \ \ if\ \ \vert y\vert=R,
\end{cases}
\end{eqnarray*}
where $h(y)=\bar{u}(x-y)/(\int_{\R^N}Q(x)(I_2*\vert \bar{u}(x-y)\vert^2)\vert \bar{u}(x-y)\vert^2)^\frac{1}{4}$.
\end{Lem}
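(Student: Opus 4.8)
The plan is to establish the strict inequality $M<\bar M$ first (this is the substantive part), and then to read off (i) and (ii) from the behaviour of $h$ at infinity. Since $\overline{\V}\subset\V$ we have $\bar M\ge M$ for free, so it suffices to exclude $\bar M=M$. Assume $\bar M=M=M^*$. By Ekeland's variational principle on the $C^1$ constraint manifold $\overline{\V}$, choose a minimizing sequence $\{u_m\}\subset\overline{\V}$ with $J(u_m)\to M$ and $dJ|_{\overline{\V}}(u_m)\to0$. Writing the associated Euler--Lagrange identity with a Lagrange multiplier $\mu_m$ for the constraint defining $\V$ and multipliers $\nu_m^1,\dots,\nu_m^N$ for the constraints $\beta_i(u)=0$, and testing it against $u_m$ (where $\beta_i'(u_m)[u_m]=2\beta_i(u_m)=0$) gives $\mu_m\to M/2$; the point that has to be checked is that $\nu_m^i\to0$, after which the rescaled sequence $\tilde u_m:=J(u_m)^{1/2}u_m$ is a genuine $(PS)$ sequence for $I$ at the level $\tfrac14 M^2=\tfrac14(M^*)^2$. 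Applying Lemma \ref{2-3} and Remark \ref{2-5} (the level belonging to $[\tfrac14(M^*)^2,\tfrac12(M^*)^2)$) leaves only $k=0$ or $k=1$. If $k=0$, then $\tilde u_m\to\tilde u^0$ strongly, so $u_m$ converges strongly to some $u^0\in\overline{\V}$ with $J(u^0)=\bar M=M$; since $u^0\in\V$ this means $M$ is attained, a contradiction. If $k=1$, then $I(\tilde u^0)+I^*(\tilde u^1)=\tfrac14(M^*)^2$ with $I(\tilde u^0)\ge0$ ($\tilde u^0$ solving (\ref{1-0})) and $I^*(\tilde u^1)\ge\tfrac14(M^*)^2$ by Remark \ref{2-5}, which forces $\tilde u^0\equiv0$ and $\tilde u^1$ a ground state of (\ref{1-1}); by uniqueness of positive solutions of (\ref{1-1}), $\tilde u^1$ is a translate of $\bar u$, hence $\tilde u_m=\bar u(\cdot+\zeta_m)+o(1)$ in $H^1(\R^N)$ with $|\zeta_m|\to+\infty$. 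Since $|\chi(|x|)x|\le1$ pointwise, one then has $\beta(\tilde u_m)=\int_{\R^N}|\bar u(x+\zeta_m)|^2\chi(|x|)x\,dx+o(1)$, and as $|\zeta_m|\to+\infty$ this converges, along a subsequence, to $\gamma e$ for some $\gamma>0$ and $e\in S^{N-1}$, contradicting $\beta(\tilde u_m)=J(u_m)\beta(u_m)=0$. Therefore $\bar M>M$.

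For (i): since $\bar u$ solves (\ref{1-1}), testing with $\bar u$ gives $\int_{\R^N}\bar Q(I_2*|\bar u|^2)|\bar u|^2=J(\bar u)$, and because $\bar u$ is a ground state this yields $J(\bar u)=(M^*)^2$. By the translation invariance of $J$,
\[
J(h(y))=\frac{J(\bar u)}{\left(\int_{\R^N}Q(x)\,(I_2*|\bar u(\cdot-y)|^2)(x)\,|\bar u(x-y)|^2\,dx\right)^{1/2}}.
\]
A translation in the integral, condition (\textbf{C}) and dominated convergence (the integrand being dominated by $\|Q\|_\infty(I_2*|\bar u|^2)|\bar u|^2\in L^1(\R^N)$, with the decay estimate of Proposition \ref{1-2} available) show that the denominator tends to $J(\bar u)^{1/2}$ as $|y|\to+\infty$, so $J(h(y))\to J(\bar u)^{1/2}=M^*=M$. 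On the other hand $h(y)\in\V$ for every $y$, whence $J(h(y))\ge M$; since $M$ is not attained the inequality is strict, $J(h(y))>M$ for all $y$. Consequently there is $R_1>0$ with $M<J(h(y))<\tfrac{M+\bar M}{2}$ whenever $|y|\ge R_1$.

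For (ii): put $c(y)=\bigl(\int_{\R^N}Q\,(I_2*|\bar u(\cdot-y)|^2)|\bar u(\cdot-y)|^2\bigr)^{1/2}>0$; the substitution $x=z+y$ gives
\[
(\beta(h(y)),y)=\frac{1}{c(y)}\int_{\R^N}|\bar u(z)|^2\,\chi(|z+y|)\,(z\cdot y+|y|^2)\,dz.
\]
Split this over $\{|z|\le|y|/2\}$ and $\{|z|>|y|/2\}$. On the first set, for $|y|>2$ we have $|z+y|\ge|y|/2>1$, so $\chi(|z+y|)=1/|z+y|>0$, while $z\cdot y+|y|^2\ge|y|^2-|z||y|\ge|y|^2/2>0$; hence this part is at least $c_0|y|\,\|\bar u\|_{L^2(\R^N)}^2$ with $c_0>0$ once $|y|$ is large. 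On the second set, $|\chi(|z+y|)(z\cdot y+|y|^2)|\le|y|(|z|+|y|)$, and the exponential decay of $\bar u$ from Proposition \ref{1-2} makes $\int_{|z|>|y|/2}|\bar u(z)|^2|y|(|z|+|y|)\,dz=o(1)$ as $|y|\to+\infty$. Therefore $(\beta(h(y)),y)>0$ for all large $|y|$, in particular for $|y|=R$ once $R\ge R_2$ for a suitable $R_2$. Taking $R=\max(R_1,R_2)$ yields both (i) and (ii).

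I expect the main obstacle to be the verification, in the first step, that the Lagrange multipliers $\nu_m^i$ of the constraints $\beta_i=0$ vanish in the limit; equivalently, that a minimizing sequence for $\bar M$ must either converge in $H^1(\R^N)$ (forcing $M$ to be attained) or escape to infinity as a single normalized translate of $\bar u$ (forcing $\beta$ not to vanish). Once this is in place, Lemma \ref{2-3} and Remark \ref{2-5} apply directly, and (i) and (ii) follow from condition (\textbf{C}) and Proposition \ref{1-2} by the elementary estimates above.
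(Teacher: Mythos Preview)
Your overall architecture matches the paper's: argue by contradiction assuming $\bar M=M$, feed a rescaled minimizing sequence into the global compactness Lemma~\ref{2-3}/Remark~\ref{2-5}, eliminate $k=0$ because it would force $M$ to be attained, and eliminate $k=1$ because a single escaping bubble cannot have vanishing barycenter. Parts (i) and (ii) are handled the same way in the paper (dominated convergence for (i), the exponential decay of Proposition~\ref{1-2} for (ii)); your treatment of these is slightly more detailed than the paper's but equivalent.

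The one genuine gap is exactly the one you flagged: the multipliers $\nu_m^i$ coming from the constraints $\beta_i=0$. You do not prove they vanish, and it is not clear that they do without further work. The paper sidesteps this issue entirely by a simple observation you are missing: under the hypothesis $\bar M=M$, any minimizing sequence $\{u_m\}\subset\overline{\V}$ for $\bar M$ is automatically a minimizing sequence for $M$ on the \emph{larger} manifold $\V$. Hence one may apply Ekeland's principle on $\V$ (not on $\overline{\V}$), obtaining a sequence with $dJ|_{\V}(u_m)\to 0$ and only the single multiplier $\mu_m$ for the $\V$-constraint to control. The barycenter condition is then not imposed as a constraint generating multipliers; it is only used at the very end, where one needs merely $\beta(u_m)\to 0$ (which survives the Ekeland perturbation by continuity of $\beta$) to reach the contradiction with the computation showing $\beta$ of a translated bubble is bounded away from zero. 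With this modification your argument goes through without the unresolved step.
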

\begin{proof}
It is obvious that $\bar{M}\geq M=M^*$. To prove $\bar{M}>M$, we shall argue by contradiction. Suppose $\bar{M}=M$, then there exists a sequence $\{u_m\}\subset \overline{\V}$ such that $J(u_m)\rightarrow M$ and $dJ|_{V}(u_m)\rightarrow 0$ as $m\rightarrow +\infty$. There exists $u^0$ such that $u_m\rightharpoonup u^0$ in $H^1(\R^N)$.
Let $v_m=M^\frac{1}{2}u_m$, we deduced that as $m\rightarrow +\infty$
\begin{eqnarray}
I(v_m)&\rightarrow&\frac{1}{4}M^2,\\
dI(v_m)&\rightarrow& 0.
\end{eqnarray}
Denote $v^0=M^\frac{1}{2}u^0$ from Remark \ref{2-5}, we have
\begin{equation}
v_m(x)=v^0+u(x-y_m)+w_m(x),
\end{equation}
where $u$ is either 0 or positive solution of (\ref{1-1}) and $w_m\rightarrow 0$.
If $u\equiv 0$, we get $v^0\not\equiv 0,\ v_m\rightarrow v^0$. Thus $u_m\rightarrow u^0$ and $u^0 \in \V,\ J(u^0)=M$ which is a contradiction.
So $u\not\equiv 0$, hence $v^0\equiv 0$.

Let us set $(\R^N)^{+}_m=\{x\in \R^N : (x,y_m)>0\}$ and $(\R^N)^{-}_m=\R^N\backslash (\R^N)^{+}_m$. Choosing $m$ large enough, since $\vert y_m\vert\rightarrow +\infty$, we can assert that there is a ball $B_r(y_m)=\{x \in \R^N : \vert x-y_m\vert<r\} \subset (\R^N)^{+}_m$ such that $\forall\ x\in B_r(y_m),\ \ u(x-y_m)\geq\frac{1}{2}u(0)>0$.
By Proposition \ref{1-2}
\begin{eqnarray}\label{3-1}
&&(\beta(u(x-y_m)),y_m)\nonumber\\
&=&\int_{(\R^N)^{+}_m}u(x-y_m)\chi(\vert x\vert)(x,y_m)+\int_{(\R^N)^{-}_m}u(x-y_m)\chi(\vert x\vert)(x,y_m)\nonumber\\
&\geq&\int_{B_r(y_m)}\frac{1}{2}u(0)\chi(\vert x\vert)(x,y_m)-\int_{(\R^N)^{-}_m}\frac{kR\vert y_m\vert}{e^{\sigma\vert x-y_m\vert}}\nonumber\\
&\geq& C-o(\frac{1}{\vert y_m\vert}).
\end{eqnarray}
where $C$ is a positive constant.
Thus $\beta(v_m)\neq 0$ for $m$ large enough. So $\beta(u_m)\neq 0$ for large $m$ which is a contradiction.

By $Q(x)\rightarrow \bar{ Q}$ as $\vert x\vert\rightarrow +\infty$, it is easy to check that $h(y)$ is continuous on $y$ and $J(h(y))\rightarrow M^*=M$, then (i) is satisfied by choosing $R>0$ large enough. (ii) is analogous to the calculation of (\ref{3-1}), $(\beta(h(y)),y)>0$ if $\vert y\vert=R$.
\end{proof}

For fixed $R$ define
\begin{eqnarray}
F&=&\{f\in C(\overline{B_R},\V):f|_{\partial B_R}=h|_{\partial B_R}\},\\
c&=&\inf_{f\in F}\max_{y\in \overline{B_R}}J(f(y)).
\end{eqnarray}

Since Lemma \ref{3-0} (ii), by Brouwer degree, for any $f\in F$ there exists a point $y\in B_R$ such that $\beta(f(y))=0$ and consequently $f(y)\in \overline{\V}$. So $c\geq \bar{M}>M=M^*$. Condition (\textbf{C}) deduces for $y\in \R^N$
\begin{equation*}
\int_{\R^N}Q(x)(I_2*\vert \bar{u}(x-y)\vert^2)\vert \bar{u}(x-y)\vert^2>\frac{\sqrt 2}{2}\int_{\R^N}\bar{Q}(I_2*\vert \bar{u}(x-y)\vert^2)\vert \bar{u}(x-y)\vert^2
\end{equation*}
As a consequence, we get
\begin{eqnarray}
\max_{y\in \overline{B_R}}J(h(y))&=&\max_{y\in \overline{B_R}}\frac{\int_{\R^N}\vert \nabla \bar{u}\vert^2+\vert \bar{u}\vert^2}{(\int_{\R^N}Q(x)(I_2*\vert \bar{u}(x-y)\vert^2)\vert \bar{u}(x-y)\vert^2)^\frac{1}{2}}\nonumber\\
&<&\sqrt 2\frac{\int_{\R^N}\vert \nabla \bar{u}\vert^2+\vert \bar{u}\vert^2}{(\int_{\R^N}\bar{Q}(I_2*\vert \bar{u}(x)\vert^2)\vert \bar{u}(x)\vert^2)^\frac{1}{2}}\nonumber\\
&=&\sqrt 2M^*.
\end{eqnarray}
$M^*<\bar{M}\leq c<\sqrt 2M^*$. By Lemma \ref{3-0} (i)
\begin{equation}
\max_{y\in \partial B_R}J(h(y))<\frac{M+\bar{M}}{2}<\bar{M}<c.
\end{equation}
Thus by Lemma \ref{2-7}, we conclude that $J|_{\V}$ has a critical point $v_0$ such that $J(v_0)=c$, $dJ|_{\V}(v_0)=0$.
Let $u^0=c^\frac{1}{2}v_0$, then it is easy to see that $u^0$ is a positive high energy solution of (\ref{1-0}) and $I(u^0)=\frac{1}{4}c^2<\frac{1}{2}{M^*}^2$.
Thus we complete the proof of Theorem \ref{1-3}.

\vspace{0.5cm}
\noindent{\bf Acknowledgments:}
 This work was partially supported by NSFC grants (No.11771469 and No.11688101). Cao was also supported by the Key Laboratory of Random Complex Structures and Data Science, AMSS, Chinese Academy of Sciences (2008DP173182).

\newpage
{\bf References}

%\addcontentsline{toc}{section}{References}

\end{document}